\documentclass[conference,10pt,letterpaper]{IEEEtran}
\usepackage[T1]{fontenc}
\usepackage{amsmath,amssymb,amsthm,booktabs,array,cite,microtype}
\usepackage[hidelinks]{hyperref}
\usepackage{url}
\usepackage{balance}
\newcommand{\F}{\mathbb F_2}
\newcommand{\wt}{\operatorname{wt}}
\newcommand{\supp}{\operatorname{supp}}
\newtheorem{theorem}{Theorem}
\newtheorem{proposition}[theorem]{Proposition}

\newtheorem{remark}[theorem]{Remark}

\usepackage{xcolor}
\newcommand{\HX}{H_X}
\newcommand{\HZ}{H_Z}

\newcommand{\Ftwo}{\mathbb{F}_2}

\newcommand{\tr}{\mathsf{T}}
\newcommand{\vect}[1]{\mathbf{#1}}
\newcommand{\set}[1]{\mathcal{#1}}

\newcommand{\defeq}{\triangleq}

\newtheorem{lemma}{Lemma}
\newcommand{\matr}[1]{{#1}}\newcommand{\rk}{\operatorname{rank}_{\F}}
\newcommand{\vertsize}{{J}}
\newcommand{\horsize}{{L}}
\newcommand{\matrH}{\matr{H}}
\newcommand{\row}{\operatorname{row}}

\newcommand{\code}[3]

\title{Quantum LDPC Codes from Polynomial Minors}
\author{
Vahid~Nourozi$^\dag$, Roxana~Smarandache$^\ddag$,  
and David~G.~M.~Mitchell$^\dag$
\\[2ex]
$^\dag$Klipsch School of Electrical and Computer Engineering,
New Mexico State University\\
Las Cruces, NM 88003, USA,
Email: nourozi@nmsu.edu; dgmm@nmsu.edu.\\
$^\ddag$Departments of Mathematics and Electrical Engineering, University of Notre Dame\\
Notre Dame, IN 46556, USA,
Email: rsmarand@nd.edu.\vspace{-1mm}
}

\usepackage{geometry}
\begin{document}
\maketitle
\thispagestyle{empty}\pagestyle{empty}

\begin{abstract}
We construct quantum codes from codewords of a cyclic code and the minors of a polynomial matrix. The construction uses a two-row, three-column matrix whose minors yield a second matrix satisfying the CSS orthogonality condition. We derive the code dimension from the greatest common divisors of the entries and minors, with a simplified formula for odd lifting factors. A residue characterization identifies additional common factors of the minors and leads to a method for controlling the dimension through polynomial selection. We also give a condition for changing the minor generator without changing the code. Explicit polynomial constructions yield 13 quantum codes with verified minimum distances, including a $[[255,80,12]]$ code of rate $80/255$.\end{abstract} 
\begin{IEEEkeywords}
Quantum codes, cyclic codes, quasi-cyclic codes, polynomial matrices, minors.
\end{IEEEkeywords}

\section{Introduction}
The CSS construction associates a quantum code with two binary linear codes satisfying an inclusion condition~\cite{calderbank,steane}. Cyclic codes provide a convenient way to obtain such pairs. Their codewords can be represented by polynomials, and the same polynomials can be used as entries of a matrix whose binary expansion consists of circulant blocks. Moreover, minors of polynomial matrices can be used to construct generator matrices for quasi-cyclic (QC) codes as well as to determine the rank~\cite{minors, smarandache2024generalized}. 

Algebraic constructions of quantum LDPC codes include the bicycle codes of MacKay, Mitchison, and McFadden~\cite{mackay2004sparse}, the generalized bicycle (GB) codes of Kovalev and Pryadko~\cite{kovalev2013quantum}, and the constructions of Kasai et al.~\cite{kasai2011quantum} that are based on orthogonal quasi-cyclic matrices. Panteleev and Kalachev introduced lifted-product codes~\cite{panteleev2021quantum}, while Lin and Pryadko studied two-block group algebra codes over finite groups~\cite{lin2024quantum}. Bravyi et al. constructed bivariate bicycle (BB) codes from sparse bivariate polynomials 
~\cite{bravyi2024high}. Related constructions using voltage labels and structured polynomial supports appear in~\cite{nouroziapm,nouroziq,nourozilp,Nourozi2026sep4}. In particular, we will use the correspondence between polynomial and circulant transposition described in~\cite{nourozilp}. 

In this paper, we take a polynomial matrix with two rows and three columns and form its three minors. The resulting matrices must satisfy the CSS orthogonality condition for any choice of the polynomial entries using arguments from~\cite{minors,smarandache_rank,sv12}. We consider both a general polynomial matrix and a specialization in which the second row is determined by the first. The code dimension is expressed in terms of polynomial gcds. For odd $N$, it simplifies to
\begin{equation}
K=\deg g+2\deg g_1,
\label{eq:introdim}
\end{equation}
where $g$ and $g_1$ are the gcds of the entries and minors, respectively, with $x^N+1$. We characterize the additional factors of $g_1$ and show how residue conditions can be used to select the polynomials and control the dimension. We also characterize multipliers that preserve the code while changing its generators. The resulting constructions include a $[[255,80,12]]$ code, with exact distances established by explicit logical operators and exhaustive enumeration.\section{Notation and Preliminaries}
Let $N$ be the lifting factor and let
\begin{equation}
 R_N=\F[x]/(x^N+1).
\end{equation}
Polynomial sums and products are computed in $R_N$. Divisibility, factorizations, and gcds are taken in $\F[x]$, using representatives of degree less than $N$. A polynomial $a(x)=\sum_{i=0}^{N-1}a_i x^i$ also corresponds to the binary circulant matrix whose first row is $(a_0,\ldots,a_{N-1})$ and whose subsequent rows are obtained by cyclic shifts to the right.

At fixed length $N$, define
\begin{equation}
 a^\top(x)=a(x^{-1})\pmod{x^N+1}.
 \label{eq:transpose}
\end{equation}
The circulant represented by $a^\top$ is the transpose of the circulant represented by $a$. In particular, for any integer $t$
\begin{equation}
 (x^ta)^\top=x^{N-t}a^\top, 
 \label{eq:shifttranspose}
\end{equation}
where the exponent  is reduced modulo $N$. A $J\times L$  polynomial matrix  $\matrH(x) \in (R_N)^{J\times L}$
denotes  its $JN\times LN$  binary circulant expansion when its kernel, image, or binary rank is used. We write $\operatorname{Im} H$ for the binary row space of $H$, and identify binary row scalar vectors with their transposes in kernel expressions. If $H_X(x),\>H_Z(x)$ are two polynomial matrices over $R_N$,  of sizes  $J_X\times L$ and $J_Z\times L$, respectively, then, we define the classical codes
\[
\begin{aligned}
C_X&=\{u\in\mathbb F_2^n\,|\,H_Xu^T=0\},\\
C_Z&=\{u\in\mathbb F_2^n\,|\,H_Zu^T=0\},
\end{aligned}
\]
where $u$ is a row vector.   The expanded matrices have $m_X=NJ_X$, $m_Z=NJ_Z$ rows and $n=NL$ columns. Thus $J_X$ counts blocks of checks, while $m_X$ counts individual binary checks. 

The CSS condition $H_XH_Z^T=0$ is equivalent to
$C_Z^\perp\subseteq C_X$, or equivalently
$C_X^\perp\subseteq C_Z$.
When this condition holds, let
$\mathcal Q=\mathcal Q(C_X,C_Z)$ 
denote the binary
CSS code whose X-type and Z-type stabilizer
generators are specified by the rows of
$H_X$ and $H_Z$, respectively~\cite{calderbank,steane}. We refer to these stabilizers as checks. 
Let $r_X=\operatorname{rank}_{\mathbb F_2}H_X$
and $r_Z=\operatorname{rank}_{\mathbb F_2}H_Z$.
The CSS code $\mathcal Q(H_X,H_Z)$ has parameters
$[[n,k,d]]$, where $k$ is the number of encoded
qubits and $d_X,d_Z$ are its X-type and Z-type
logical distances. Explicitly,
\begin{align}
 k&=n-\rk\HX-\rk\HZ,\label{eq:cssdimension}\\
 d_X&=\min\{\wt(u)\,|\,u\in\ker\HZ\setminus\row(\HX)\},\nonumber\\
 d_Z&=\min\{\wt(u)\,|\,u\in\ker\HX\setminus\row(\HZ)\},\nonumber\\
 d&=\min(d_X,d_Z).\label{eq:distance}
\end{align}

The following theorem from~\cite{smarandache_rank, minors} gives the formula for the rank {over $\Ftwo$} of a $JN\times LN$ {scalar} matrix using the computation of the degree of the minors of the corresponding $J\times L$ polynomial matrix  $H(x)$. For subsets $\mathcal{J} \subseteq \{1, 2, \ldots, J\}$ and $\mathcal{L} \subseteq \{1, 2, \ldots, L\}$ with $|\mathcal{J}| = |\mathcal{L}|$, let $\Delta_{\mathcal{J},\mathcal{L}}(x)$ denote the determinant of the submatrix of $H(x)$ with rows indexed by $\mathcal{J}$ and columns indexed by $\mathcal{L}$. When $L = J+1$, let $\Delta_i(x)$ denote the determinant of the $J \times J$ submatrix of $H(x)$ obtained by erasing column $i$. 

\begin{theorem}[{[\citen{smarandache_rank, minors}]}]  \label{rank}  Let  $\gamma_0\defeq 1$, and, for all $1\leq i\leq J$, let  the following be defined  in $\Ftwo[x]$: 
\begin{align*}&\gamma_i(x)\defeq \gcd\{ \Delta_{\set{J},\set{L}}(x)\mid  |\set{J}|=|\set{L}|=i\},\\ 
&d_i(x) \defeq \gcd (\gamma_i(x)/\gamma_{i-1}(x), x^N+1), \text{ where }\\ 
&\gcd(0/0,x^N+1)\defeq \gcd(0,x^N+1)=x^N+1.\end{align*} Then,\begin{align*}
&\rk(H)= 
J\cdot N -\sum_{i=1}^{J} \deg d_i(x).\\ 
\end{align*}\end{theorem}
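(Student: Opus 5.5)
The plan is to reduce the binary rank computation to a module-theoretic statement over the polynomial ring, and then to use the Smith normal form over the principal ideal domain $\F[x]$. The binary expansion of $H(x)$ realizes the $R_N$-linear map $R_N^{L}\to R_N^{J}$ (or its transpose, acting on rows) given by multiplication by $H(x)$. Its $\F$-rank equals the $\F$-dimension of the image, an $R_N$-submodule of $R_N^J$. Since $R_N=\F[x]/(x^N+1)$, we lift $H(x)$ to a matrix over $\F[x]$ using representatives of degree less than $N$. This lifting is harmless: the image module and its dimension depend only on the residues.

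First, we take the Smith normal form over $\F[x]$. There are unimodular matrices $U(x)\in GL_J(\F[x])$ and $V(x)\in GL_L(\F[x])$ with
\[
U H V=\operatorname{diag}(s_1,\ldots,s_J)
\]
(padded with zero columns), where $s_1\mid s_2\mid\cdots$. By the classical determinantal divisor theorem, $\gamma_i=s_1\cdots s_i$, so $s_i=\gamma_i/\gamma_{i-1}$, with the convention $s_i=0$ once $\gamma_i=0$. Reducing modulo $x^N+1$, $U$ and $V$ remain invertible over $R_N$, since their determinants are units in $\F[x]$. Their binary expansions are therefore invertible matrices, and $\rk(H)=\rk(\operatorname{diag}(s_i)\bmod x^N+1)$.

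Second, the diagonal form splits the rank into a sum over $i$ of the $\F$-rank of the $N\times N$ circulant of $s_i \bmod (x^N+1)$, plus zero blocks. For a single polynomial $s$, multiplication by $s$ on $R_N$ has kernel isomorphic to $\F[x]/(d)$ with $d=\gcd(s,x^N+1)$. Indeed, $sa\equiv 0$ iff $\frac{x^N+1}{d}\mid a$, so the kernel has dimension $\deg d$ and the rank is $N-\deg d$. When $s=0$ we get $d=x^N+1$ and rank $0$, which matches the stated convention $\gcd(0,x^N+1)=x^N+1$. Summing over $i$ gives $JN-\sum_i\deg d_i$.

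The main obstacle is the bookkeeping at the boundaries rather than any deep step. One point is the case $J>L$ or rank-deficient $H$, where $\gamma_i=0$ for large $i$ and the quotient $0/0$ has to be interpreted consistently with the convention. Another is verifying that the minors computed over $\F[x]$ from representatives, rather than over $R_N$, give the right $d_i$. This holds because the rank depends only on the Smith form of a lift, and $\gcd(s_i,x^N+1)$ is insensitive to the choice of lift: different lifts give equivalent matrices over $R_N$, and the invariant factors of $R_N$-modules determine these gcds.
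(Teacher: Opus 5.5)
Your Smith-normal-form argument over $\F[x]$ is correct and complete. It uses $s_i=\gamma_i/\gamma_{i-1}$, the fact that $U$ and $V$ stay invertible modulo $x^N+1$ (so their circulant expansions are invertible, the expansion being a ring homomorphism), and the circulant rank $N-\deg\gcd(s_i,x^N+1)$, and it handles the zero-invariant convention correctly. The paper gives no proof of this theorem, importing it from \cite{smarandache_rank,minors}, but your route is the one its own reasoning assumes: it speaks of the ``Smith invariants'' of $H_X$ in the proof of Theorem~\ref{thm:dimension}. Your closing concern about lift-independence is unnecessary, since the argument applies verbatim to whichever lift is used to compute the $\gamma_i$ (here, representatives of degree less than $N$).
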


The following Lemma  from \cite{sv12, smarandache_rank} gives a way to create codewords starting from a given matrix $H(x)$ and using minors. 
\begin{lemma}
  \label{lemma:codeword:construction:1}
Let $S \subseteq \{1, 2, \ldots, L\}$ be any subset of size $J+1$. Define the length $L$ vector $\vect{c}_S(x) = \left(c_{S,1}(x), c_{S,2}(x), \ldots, c_{S,L}(x)\right)$ over $R_N$ by $c_{S,i}(x) \triangleq \det H_{S \setminus \{i\}}(x)$ if $i \in S$, and $0$ otherwise, where $H_{S \setminus \{i\}}(x)$ is the $J \times J$ submatrix of $H(x)$ with columns indexed by $S \setminus \{i\}$. Then $\vect{c}_S(x)$ (as a column vector) lies in $\ker H(x)$.

\end{lemma}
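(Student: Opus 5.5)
The plan is to check the claim row by row, using the Laplace (cofactor) expansion of a suitably augmented $(J+1)\times(J+1)$ matrix. Over $\F$ signs are irrelevant, and all computations take place in the commutative ring $R_N$, so determinants and cofactor expansions behave exactly as over a field.

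First I reduce to the columns in $S$. Since $c_{S,i}=0$ for $i\notin S$, the $j$-th entry of $H(x)\vect{c}_S(x)$ is $\sum_{i\in S} h_{j,i}(x)\det H_{S\setminus\{i\}}(x)$. Let $H_S$ denote the $J\times(J+1)$ submatrix of $H(x)$ with columns indexed by $S$.

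Next, for a fixed row index $j\in\{1,\ldots,J\}$, I form the $(J+1)\times(J+1)$ matrix $M_j$ by placing row $j$ of $H_S$ on top of $H_S$. This matrix has two identical rows, so $\det M_j=0$ in $R_N$. The identity is polynomial in the entries and holds over any commutative ring; in characteristic $2$ it follows from the Leibniz expansion, where the terms pair up and cancel.

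I then expand $\det M_j$ along its first row. The minor obtained by deleting the first row and the column indexed by $i$ is exactly $\det H_{S\setminus\{i\}}(x)$, and the signs $\pm1$ equal $1$ in $\F$. Hence
\[
0=\det M_j=\sum_{i\in S} h_{j,i}(x)\det H_{S\setminus\{i\}}(x),
\]
which is the $j$-th entry of $H(x)\vect{c}_S(x)$. Since $j$ was arbitrary, $H(x)\vect{c}_S(x)=0$ in $R_N^J$.

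Finally, I pass to the binary level. Multiplication in $R_N$ corresponds to the product of the associated circulants, and sums correspond to sums. Therefore the polynomial identity $H(x)\vect{c}_S(x)=0$ says that the binary expansion of $H$ annihilates the stacked circulant expansion of $\vect{c}_S$. In particular, every column of that expansion, including the coefficient vector of $\vect{c}_S$, lies in $\ker H$.

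The only point needing care is this last one: matching the convention for which vector represents a polynomial, row versus column of a circulant. Polynomial products commute, so the identity holds regardless of ordering, and each column of the expanded $\vect{c}_S$ is a cyclic shift of the coefficient vector, which is itself annihilated. I expect no real obstacle beyond stating this correspondence consistently with the paper's circulant convention.
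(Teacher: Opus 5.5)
Your argument is correct and is the standard proof of this lemma; the paper itself gives no proof and quotes the lemma from \cite{sv12,smarandache_rank}. The $j$-th entry of $H(x)\vect{c}_S(x)$ is the first-row cofactor expansion of $H_S$ with its $j$-th row repeated on top, which is zero, and the signs do not matter over $\F$. This gives $H(x)\vect{c}_S(x)=0$ in $R_N^J$. Equivalently, the expansion of $H$ annihilates the stacked circulants of $\vect{c}_S$, and this is exactly how the lemma is used in Theorem~\ref{minors-code}.

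One claim in your last paragraph is wrong, however. Under the paper's convention (first row $(a_0,\ldots,a_{N-1})$, later rows shifted right), the $(r,s)$ entry of the circulant of $a$ is $a_{s-r}$. So its column $s$ is the coefficient vector of $x^s a^\top$, not a cyclic shift of the coefficient vector of $a$. The columns of the expansion of $\vect{c}_S$ are therefore the coefficient vectors of $x^s\vect{c}_S^\top$ (entrywise $x\mapsto x^{-1}$). These, and not the coefficient vector of $\vect{c}_S$, are what lie in the binary kernel.

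For example, take $N=3$ and $H=[\,1\ \ x\,]$, so that $\vect{c}_S=(x,1)$. The expansion of $H$ sends $(0,1,0\,|\,1,0,0)$ to $(0,1,1)\neq0$. It sends the coefficient vector $(0,0,1\,|\,1,0,0)$ of $\vect{c}_S^\top=(x^2,1)$ to $0$.

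Commutativity of $R_N$ does not rescue your statement, because the mismatch is the involution $x\mapsto x^{-1}$, not the order of the factors. This is why the paper places $\top$ on the rows of $H_Z$. Replace ``the coefficient vector of $\vect{c}_S$'' by ``the coefficient vector of $\vect{c}_S^\top$'' and the proof is complete.
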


 \begin{theorem} \label{minors-code} Let $H_X (x)\in (R_N)^{J_X\times L}$. For a fixed  $\set{S}\subset \{1, 2, \ldots, \horsize\}$
of size $\vertsize_X $, form the column vectors  $\vect{c}_{\set S\cup \{i\}}(x)$, for all $i\notin \set S$. Define $H_Z(x) \in (R_N)^{J_Z\times L}$  with rows given by the vectors  
 $\vect{c}^\top_{\set S\cup \{i\}}(x)$.  Then $H_XH_Z^\top=0$ and $C_X^\perp\subseteq C_Z$, so the length $NL$ code $\mathcal Q (C_X, C_Z)$ exists. If \[
\operatorname{rank}H_X+\operatorname{rank}H_Z<LN, 
\] then its dimension is positive. 
 
 In particular, if $L=J_X+1$, $H_Z=\begin{bmatrix} \Delta^\tr_1& \Delta^\tr_2&\ldots&\Delta^\tr_L\end{bmatrix}$. If $L = J_X + 1$ and  $\gamma_{i,X}$ and  $d_{i,X}$ are defined as in Theorem~\ref{rank} for the matrix $H_X$, for all $1\leq i\leq J_X$,  then the dimension of  the  quantum code is
 \begin{equation}\label{dimension} K=\sum_{i=1}^{J} \deg d_{i,X}(x) + \deg  \gcd(\gamma_{J_X,X}, x^N+1).\end{equation}
 \end{theorem}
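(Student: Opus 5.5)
The plan has three parts: orthogonality, the positive-dimension claim, and the dimension formula for $L=J_X+1$.

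\textbf{Orthogonality.} By Lemma~\ref{lemma:codeword:construction:1}, each column vector $\vect{c}_{\set S\cup\{i\}}(x)$ with $i\notin\set S$ lies in $\ker H_X(x)$. Hence $H_X(x)\,\vect{c}_{\set S\cup\{i\}}(x)=0$ in $R_N^{J_X}$. I will pass to binary expansions using the transposition rule \eqref{eq:transpose}. If $H_Z(x)$ has row $\vect{c}^\top_{\set S\cup\{i\}}(x)$ with entrywise polynomial transpose, then the binary expansion of that block row is the block transpose of the expansion of the column vector $\vect{c}_{\set S\cup\{i\}}(x)$. This is because the circulant of $a^\top$ is the transpose of the circulant of $a$. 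Therefore the binary product $H_XH_Z^\top$ equals the expansion of $H_X(x)\,[\vect{c}_{\set S\cup\{i\}}(x)]_i$, which is zero. The equivalence with $C_X^\perp\subseteq C_Z$ is the standard CSS fact from the preliminaries. Positivity of the dimension follows directly from \eqref{eq:cssdimension}, since $k=LN-\operatorname{rank}H_X-\operatorname{rank}H_Z$.

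\textbf{Dimension for $L=J_X+1$: rank of $H_X$.} Here $\set S$ has size $J_X=L-1$, so there is exactly one $i\notin\set S$. The single vector $\vect{c}_{\{1,\dots,L\}}$ has entries $\Delta_i$, and $H_Z=[\Delta_1^\top\ \cdots\ \Delta_L^\top]$. Over $\F$ the signs in the cofactor expansion disappear. Theorem~\ref{rank} then gives
\[
\operatorname{rank}H_X=J_XN-\sum_{i=1}^{J_X}\deg d_{i,X}.
\]

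\textbf{Dimension for $L=J_X+1$: rank of $H_Z$.} $H_Z$ is a $1\times L$ polynomial matrix, so I apply Theorem~\ref{rank} to it with $J=1$. Then $\gamma_1=\gcd_i(\Delta_i^\top)$ and
\[
\operatorname{rank}H_Z=N-\deg\gcd(\gamma_1,x^N+1).
\]
The main obstacle is relating $\gcd(\gcd_i\Delta_i^\top,\,x^N+1)$ to $\gcd(\gamma_{J_X,X},\,x^N+1)$, where $\gamma_{J_X,X}=\gcd_i\Delta_i$. I would sidestep the polynomial computation by transposing at the binary level. The binary expansion of $[\Delta_1^\top\cdots\Delta_L^\top]$ is the block transpose of the column $[\Delta_1;\dots;\Delta_L]$, so its rank equals the rank of the $1\times L$ polynomial matrix $[\Delta_1\ \cdots\ \Delta_L]$. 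The rank of a matrix and of its transpose agree, and the row of circulants $[\Delta_i]$ has the same rank as the column $[\Delta_i]$ because each circulant's transpose is again a circulant. Applying Theorem~\ref{rank} to $[\Delta_1\ \cdots\ \Delta_L]$ then gives $\operatorname{rank}H_Z=N-\deg\gcd(\gamma_{J_X,X},x^N+1)$.

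If the gcd is taken directly instead, I would use that $a\mapsto a^\top$ is a ring automorphism of $R_N$. It sends ideals to ideals, and the ideal generated by the $\Delta_i$ together with $x^N+1$ is principal, generated by $\gcd(\gamma_{J_X,X},x^N+1)$. Its image is the ideal of the $\Delta_i^\top$, and an automorphism preserves the $\F$-dimension of the quotient, so the degrees coincide.

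\textbf{Combining.} Substituting both ranks into $K=LN-\operatorname{rank}H_X-\operatorname{rank}H_Z$ with $LN=(J_X+1)N$ gives \eqref{dimension}.
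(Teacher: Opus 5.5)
Your proof is correct and follows the paper's route: the rank of $H_X$ comes from Theorem~\ref{rank}, and the rank of $H_Z$ comes from its binary transpose, the column $[\Delta_1;\dots;\Delta_L]$ whose entry gcd is $\gamma_{J_X,X}$, before substituting into \eqref{eq:cssdimension} (you also spell out the orthogonality, which the paper leaves implicit). One caveat: your step claiming that the row and the column of circulants $\Delta_i$ have equal rank ``because each circulant's transpose is a circulant'' is circular, since the transpose of that column is $H_Z$ itself, so either apply Theorem~\ref{rank} directly to the $L\times 1$ column as the paper does, or rely on your ring-automorphism argument, which is sound.
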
  
  \begin{proof}  $\gamma_{J_X,X}$ is the gcd of all $J_X\times J_X$ minors $\Delta_i$ of $H_X$, and, therefore, it is the gcd of all the entries of the matrix $H_Z^T$, giving $d_{1,Z}(x)= \gcd(\gamma_{J_X,X}, x^N+1)$. 
  Then $ \rk(H_X)=  
J_X N -\sum_{i=1}^{J_X} \deg d_{i,X}(x) $ and $\rk(H_Z)= \rk(H_Z^T)= N - \deg  \gcd(\gamma_{J_X,X}, x^N+1)$.
The dimension of $\mathcal Q$ now follows from \eqref{eq:cssdimension}. 
\end{proof}
 \begin{remark}\label{remark3}
 Since $C_Z^\perp$ is also a subset of a supercode $C'_X\defeq \ker (H'_X)$ of $C_X$, where $H'_X$ is a $J'\times L$ polynomial submatrix of $H_X$ obtained by taking a smaller set of rows, then the code $\mathcal  Q'(C'_X, C_Z) $ can be constructed with a larger dimension, and possibly the same minimum distance, as the next example shows. 

The second remark is that the generalized bicycle code \cite{kovalev2013quantum} given by $$H_X=\begin{bmatrix} a(x)& b(x) \end{bmatrix}, H_Z=\begin{bmatrix} b^\tr(x)& a^\tr(x) \end{bmatrix}$$ is an example of a code obtained by Theorem~\ref{minors-code}. 
  \end{remark}

  Before presenting our main construction, we exemplify the results in the following example.
  
  \emph{Example:} We take $N$ variable, and
\begin{align*}&H(x)=  \begin{bmatrix}   1+x^2 &1+x^4& 1+x^6& 1+x^8\\
                            x^4+x^{12}& x^{20} +x^{22}& x^{30}+ x^{42}& x^{40} +x^{14}\\
                             1+x^4 &x^{30}+x^{24}& x^{12}+x^{14}& x^3+x^{13} \end{bmatrix}. \end{align*}     
We compute
\begin{align*}
&\gamma_1=x^2+1, \quad \gamma_2=(x^2+1)^2, \quad \gamma_3=(x^2+1)^3, \\ 
&\frac{\gamma_1}{\gamma_0}=\frac{\gamma_2}{\gamma_1}=\frac{\gamma_3}{\gamma_2}= x^2+1 \Longrightarrow \text{ for all } 1\leq i\leq 3,  \\
&d_i=\gcd(x^2+1, x^N+1)=\left\{\begin{matrix} x^2+1, & \text{ if }  N \text{ even} \\
 x+1,& \text{ if }  N \text{ odd}
\end{matrix}\right.. \end{align*}

      Let \begin{align*} H_Z &\defeq \begin{bmatrix}\Delta^\tr_{1}& \Delta^\tr_2& \Delta^\tr_3& \Delta^\tr_4\end{bmatrix},\\   
  C_X&=\ker H_X, \quad   C_Z= \ker H_Z,\\
\mathcal Q &=\mathcal Q(C_X, C_Z).\end{align*}       
For $N=44$, the gcd of all the entries in $H_Z$ and $x^{N}+1$  is $(x^2+1)^2$. 
Therefore, \eqref{dimension} gives  $K=6+4=10$. Indeed, this code has parameters $[[176,10,8]]$. 
    Enlarging  $C_X$ to be equal to the kernel of only the first two rows of $H_X$, since $C_Z$ is still a subcode, we can form the quantum code of larger dimension $[[176, 52, 8]]$. $\hfill \Box$\vspace{3mm}
  
In the remainder of the paper, starting from a $2\times 3$ polynomial matrix,  we will use this idea in order  to construct a class of quantum codes of length $3N$.

\section{Construction from Polynomial Minors}
Let $g(x)\mid x^N+1$ and let $a,b,c,d,e,f$ be codewords of the cyclic code $\langle g(x) \rangle$ generated by $g(x)$. Take $$ C_X=\ker H_X, \quad C_Z=\ker H_Z,$$
\begin{equation}
 H_X=\begin{bmatrix}
 a^\tr&b^\tr&c^\tr\\
 d^\tr&e^\tr&f^\tr \end{bmatrix},
\end{equation}   
\begin{equation}
 H_Z= \begin{bmatrix}\Delta_1& \Delta_2 &\Delta_3\end{bmatrix}=\begin{bmatrix}bf+ce&af+cd&ae+bd\end{bmatrix}.
 \label{eq:generalC2}
\end{equation}
The following is an immediate consequence of Theorem~\ref{minors-code}. 
\begin{proposition}\label{prop:orthogonality}
The codes above satisfy $C_Z^\perp\subseteq C_X$. Hence $\mathcal Q = \mathcal Q 
(C_X,C_Z)$ is a quantum code of length $3N$.
\end{proposition}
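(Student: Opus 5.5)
The plan is to verify the CSS condition $H_XH_Z^\tr=0$ directly at the polynomial level and then transfer it to the binary circulant expansions. Equivalently, one could apply Theorem~\ref{minors-code} to the untransposed matrix $\begin{bmatrix} a&b&c\\ d&e&f\end{bmatrix}$, whose $2\times 2$ minors are exactly $\Delta_1,\Delta_2,\Delta_3$, and then transpose everything. Since the transposes have been moved between the two matrices, I prefer to check the identity by hand; the computation is short. The hypothesis that $a,\dots,f$ lie in $\langle g(x)\rangle$ plays no role here; it matters only later, for the dimension.

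\emph{Step 1: transposition on block matrices.} The binary expansion of a block matrix whose blocks are circulants has as its transpose the block matrix with the block positions transposed and each block replaced by its transpose. By \eqref{eq:transpose}, the transpose of the circulant of $a$ is the circulant of $a^\tr$. Hence the binary matrix $H_Z^\tr$ is the expansion of the $3\times 1$ polynomial column $(\Delta_1^\tr,\Delta_2^\tr,\Delta_3^\tr)^\tr$. Products of circulant block matrices correspond to products of the polynomial matrices in $R_N$, because the map sending a polynomial to its circulant is a ring homomorphism. So it suffices to show that the $2\times 1$ polynomial product $H_X(x)\,(\Delta_1^\tr,\Delta_2^\tr,\Delta_3^\tr)^\tr$ vanishes in $R_N$.

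\emph{Step 2: transposition is a ring automorphism.} The map $a(x)\mapsto a(x^{-1})$ is a ring automorphism of the commutative ring $R_N$, since $x$ is invertible there. Therefore $a^\tr\Delta_1^\tr+b^\tr\Delta_2^\tr+c^\tr\Delta_3^\tr=(a\Delta_1+b\Delta_2+c\Delta_3)^\tr$, and the same holds for the second row with $d,e,f$.

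\emph{Step 3: cofactor identity.} Expanding the first row gives
\[
a(bf+ce)+b(af+cd)+c(ae+bd)=2abf+2ace+2bcd=0
\]
over $\F$. Similarly,
\[
d(bf+ce)+e(af+cd)+f(ae+bd)=2bdf+2cde+2aef=0.
\]
These are the usual facts that a $3\times 3$ determinant with a repeated row vanishes, i.e., Lemma~\ref{lemma:codeword:construction:1} in characteristic $2$, where signs disappear.

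\emph{Step 4: conclusion.} Thus $H_XH_Z^\tr=0$ as binary matrices. Hence every row of $H_Z$ lies in $\ker H_X$, which gives $C_Z^\perp=\operatorname{Im}H_Z\subseteq C_X$. The CSS construction then yields $\mathcal Q(C_X,C_Z)$ of length $n=3N$. The only point requiring care is Step 1: one must check that the paper's convention (circulant with first row $(a_0,\ldots,a_{N-1})$, shifted right) is consistent with the transpose rule \eqref{eq:transpose} and with the multiplicativity of the polynomial-to-circulant map. This is standard, and it is precisely the correspondence cited from~\cite{nourozilp}.
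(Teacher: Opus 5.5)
Your proof is correct and is essentially the paper's argument. The paper simply cites Theorem~\ref{minors-code}: applied directly to $H_X$, whose minors are $\Delta_i^\tr$, it produces exactly $[\Delta_1\ \Delta_2\ \Delta_3]$, because transposition is an involutive ring automorphism of $R_N$. Your Steps 1--3 unpack that theorem into the cofactor identity and the circulant/transpose correspondence.
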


Taking the entries to be codewords in  a cyclic code $\langle g(x)\rangle$ generated by $ g(x)$ gives a way to choose them with a prescribed common factor to increase the dimension. 
In particular, to minimize the number of codewords in $\langle g(x) \rangle$ that we need to choose, we take $d=x^tc$, 
$e=a$, and $f=b$. Then

\begin{align}
 H_X&=\begin{bmatrix}
 a^\tr&b^\tr&c^\tr\\
 x^{-t}c^\tr&a^\tr&b^\tr
 \end{bmatrix},\label{eq:HXshift}\\
 H_Z&=\begin{bmatrix}b^2+ac&ab+x^tc^2&a^2+x^tbc\end{bmatrix}.
 \label{eq:HZshift}
\end{align}
Thus, $t=N-1$ gives the entry $xc^\tr$, while $t=0$ gives the entry $c^\tr$.

\subsection{Weights of the Matrices}
Let $w_a=\wt(a)$, $w_b=\wt(b)$, and $w_c=\wt(c)$. Each row of $H_X$ has weight $w_a+w_b+w_c$, and the column weights in its three blocks are $w_a+w_c$, $w_a+w_b$, and $w_b+w_c$. The columns in block $i$ of $H_Z$ have weight $\wt(\Delta_i)$, and each row has weight
\[
\sum_{i=1}^3\wt(\Delta_i)
\le w_a+w_b+w_c+w_aw_b+w_aw_c+w_bw_c.
\]
\subsection{Dimension and Common Factors}
Define
$$g=\gcd(a,b,c,x^N+1)$$
and assume that at least one minor $\Delta_i$ is nonzero. Since $g$ divides $a,b,c$, it follows that $g^2$ divides each minor. Hence there is a nonzero polynomial $F\in\Ftwo[x]$ such that
$$g^2F=\gcd(\Delta_1,\Delta_2,\Delta_3).$$

\begin{theorem}\label{thm:dimension}
Let \begin{equation} \label{eq:gcds} g_1(x)=\gcd(\Delta_1(x),\Delta_2(x),\Delta_3(x),x^N+1)\end{equation} and  \[
h=\gcd(gF,x^N+1).
\]
The dimension of $\mathcal Q$ is
\[
K=\deg g+\deg h+\deg g_1.
\]
In particular, for odd $N$,
\begin{equation}\label{eq:factorK}K=\deg g+2\deg g_1
=3\deg g+2\deg f,
\end{equation}
where $f=g_1/g$.
\end{theorem}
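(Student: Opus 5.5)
The plan is to specialize Theorem~\ref{minors-code} to $J_X=2$, $L=3$ and identify each of its three terms. Formula~\eqref{dimension} gives
\[
K=\deg d_{1,X}+\deg d_{2,X}+\deg\gcd(\gamma_{2,X},x^N+1),
\]
where $\gamma_{1,X}$ is the gcd of the entries of $H_X$ and $\gamma_{2,X}$ is the gcd of its $2\times 2$ minors. The entries of $H_X$ are $a^\tr,b^\tr,c^\tr$ (up to the monomial factor $x^{-t}$). The map $p\mapsto p^\tr$ of \eqref{eq:transpose} is a ring automorphism of $R_N$, and it maps the ideal $(x^N+1, p)$ onto the ideal generated by the reciprocal of $\gcd(p,x^N+1)$. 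Hence $\deg\gcd(p^\tr,x^N+1)=\deg\gcd(p,x^N+1)$, and the same holds for ideals generated by several elements. Since transposition is multiplicative, the $2\times2$ minors of $H_X$ are exactly $\Delta_1^\tr,\Delta_2^\tr,\Delta_3^\tr$, with $\Delta_i$ as in \eqref{eq:generalC2}. So all computations can be moved back to the untransposed polynomials $a,b,c$ and $\Delta_i$ without changing any degree.

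With this reduction, the first term is $\deg d_{1,X}=\deg\gcd(a,b,c,x^N+1)=\deg g$. The third term is the gcd of the entries of $H_Z$ with $x^N+1$, which is $\deg g_1$ by \eqref{eq:gcds}; this is exactly the $d_{1,Z}$ appearing in the proof of Theorem~\ref{minors-code}. It remains to show $\deg d_{2,X}=\deg h$, where
\[
d_{2,X}=\gcd(\gamma_2/\gamma_1,x^N+1).
\]
This is where I expect the main obstacle. In $\Ftwo[x]$ the gcd of the chosen representatives $a,b,c$ may be $gu$ with an extra factor $u$. That factor can share irreducible factors with $x^N+1$ to higher multiplicity, so a priori $\gamma_2/\gamma_1=g^2F/(gu)$ need not have the same gcd with $x^N+1$ as $gF$.

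My first attempt will exploit the fact that Theorem~\ref{rank} depends only on the circulant expansion, so any representatives of $a,b,c$ modulo $x^N+1$ may be used. I would replace $c$ by $c+\lambda(x^N+1)$ for a suitable $\lambda$, using the polynomial analogue of Dirichlet's argument: one can choose $\lambda$ so that
\[
\gcd(a,b,c+\lambda(x^N+1))=\gcd(a,b,c,x^N+1)=g.
\]
The minors change only by multiples of $x^N+1$, so $g_1$ is unchanged. With $\gamma_1=g$ we then get $\gamma_2/\gamma_1=gF$ and hence $d_{2,X}=h$. If the definition of $F$ is tied to the original representatives, I would fall back on a prime-by-prime valuation check, verifying for each irreducible $p\mid x^N+1$ that
\[
\min\bigl(v_p(\gamma_2)-v_p(\gamma_1),\,v_p(x^N+1)\bigr)=\min\bigl(v_p(gF),\,v_p(x^N+1)\bigr).
\]
Summing the three terms then gives $K=\deg g+\deg h+\deg g_1$.

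For odd $N$, $x^N+1$ is squarefree. Then $\gcd(p,x^N+1)$ is the product of the distinct irreducible factors of $p$ that divide $x^N+1$. Since $g\mid x^N+1$ and $g^2F$ and $gF$ have the same set of irreducible factors, we get $g_1=\gcd(g^2F,x^N+1)=\gcd(gF,x^N+1)=h$, and therefore $K=\deg g+2\deg g_1$. Finally, $g$ divides every $\Delta_i$ and $x^N+1$, so $g\mid g_1$. Writing $f=g_1/g$ gives $\deg g_1=\deg g+\deg f$, hence $K=3\deg g+2\deg f$, which is \eqref{eq:factorK}.
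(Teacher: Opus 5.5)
Your route is the paper's: specialize the dimension formula of Theorem~\ref{minors-code} to $J_X=2$, $L=3$, identify the three terms with $g$, $h$, $g_1$, and use that $x^N+1$ is squarefree for odd $N$. The paper's proof just asserts that the Smith invariants of $H_X$ are $g$ and $gF$. It passes over both the transposition and the fact that $\gcd(a,b,c)$ in $\F[x]$ can be strictly larger than $g$, so the obstacle you flag is legitimate.

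\emph{The gap:} you do not resolve that obstacle. Your first fix replaces $c$ by $c+\lambda(x^N+1)$ (which is possible by a CRT argument). This changes the minors by multiples of $x^N+1$, so it can change $\gcd(\Delta_1,\Delta_2,\Delta_3)$ and hence $F$. What it proves is $\deg d_{2,X}=\deg\gcd(gF',x^N+1)$ for a new $F'$, not for the $F$ in the statement. Your fallback valuation identity is exactly what is needed, but you state it without proving it.

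\emph{How to close it:} the missing observation, which makes any change of representatives unnecessary, is that every $2\times2$ minor is a sum of products of two entries. Hence $v_p(\gamma_2)\ge 2v_p(\gamma_1)$. Work with the matrix with rows $(a,b,c)$ and $(x^tc,a,b)$ and the original representatives. Fix an irreducible $p\mid x^N+1$ and set $e=v_p(x^N+1)$, $\alpha=v_p(\gamma_1)=v_p(\gcd(a,b,c))$, and $\beta=v_p(\gamma_2)=v_p(g^2F)$. Then $v_p(g)=\min(\alpha,e)$ and $v_p(gF)=\beta-v_p(g)$.
If $\alpha\le e$, the two sides $\min(\beta-\alpha,e)$ and $\min(\beta-v_p(g),e)$ are literally equal.
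If $\alpha>e$, then $\beta-\alpha\ge\alpha>e$ and $\beta-e\ge2\alpha-e>e$, so both sides equal $e$.
Thus $d_2=h$. Here $\gamma_2$ and $F$ must be read from the unreduced $\F[x]$-determinants, as you implicitly do. If the minors are reduced modulo $x^N+1$, then $g^2$ need not divide them: for $N=3$, $t=0$, $a=b=1+x$, $c=x+x^2$, the reduced minors are $x+x^2,\,1+x,\,x+x^2$ while $g=1+x$.

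A smaller point: your ideal-degree argument moves $d_{1,X}$ and the third term from $a^\tr,b^\tr,c^\tr$ back to $a,b,c$. It does not cover $d_{2,X}$, which involves a quotient in $\F[x]$. For that term, note that entrywise $p\mapsto p^\tr$ is an $\F$-linear bijection of $R_N^3$ carrying one row module onto the other. So the ranks agree, and therefore $\deg d_2=2N-\operatorname{rank}-\deg d_1$ agrees as well. Your odd-$N$ step, including $g\mid g_1$, is correct and matches the paper.
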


\begin{proof}
The Smith invariants of $H_X$ are $g$ and
$gF$, while the nonzero Smith invariant
of $H_Z$ is $g^2F$. 
Applying~\eqref{dimension} gives $K=\deg g+\deg h +\deg g_1$. 
For odd $N$, the polynomial $x^N+1$
is square-free. Since $g\mid x^N+1$,
\[
h=\gcd(gF,x^N+1)
=\gcd(g^2F,x^N+1)=g_1.
\]
The result follows.
\end{proof}

\subsection{Controlling the Dimension}
For odd $N$ and fixed $g$, the dimension increases with $\deg f$.
For $t=0$, one way to obtain an additional factor is to choose
\begin{equation}
 a(x)\equiv b(x)\equiv c(x)\equiv s(x)\not\equiv0\pmod{p(x)}
 \label{eq:equal}
\end{equation}
for an irreducible 
$p(x)\mid x^N+1$. The two polynomial rows then coincide modulo $p(x)$, and all the minors vanish. The following proposition gives all nonzero possibilities.

\begin{proposition}\label{prop:residues}
Let $p\mid x^N+1$ be irreducible, and suppose that not all of $a,b,c$ vanish modulo $p$. The three minors of~\eqref{eq:HXshift} vanish modulo $p$ if and only if there are nonzero $s,\lambda\in\F[x]/(p)$ such that
\begin{equation}
 (a,b,c)=s(1,\lambda^{-1},\lambda^{-2}),\qquad \lambda^3=x^t
 \label{eq:cubic}
\end{equation}
in that field. In particular, for $t=0$ this becomes
\begin{equation}
 (a,b,c)=s(1,\lambda^2,\lambda),\qquad \lambda^3=1.
 \label{eq:unity}
\end{equation}
\end{proposition}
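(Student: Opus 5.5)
The plan is to work entirely in the field $\mathbb K=\F[x]/(p)$, which is a field because $p$ is irreducible, and to turn ``all minors vanish'' into a linear-dependence statement. I will read the minors as the three polynomials appearing in~\eqref{eq:HZshift}, namely $b^2+ac$, $ab+x^tc^2$, $a^2+x^tbc$. These are exactly the $2\times 2$ minors of the untransposed matrix
\[
M=\begin{bmatrix} a&b&c\\ x^tc&a&b\end{bmatrix}.
\]
So the claim reduces to the following: over $\mathbb K$ all $2\times2$ minors of $M$ vanish if and only if $(a,b,c)$ has the form~\eqref{eq:cubic}.

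First I record that $x$ is a unit in $\mathbb K$. Since $p\mid x^N+1$, we have $p(0)=1$, so $p\neq x$. Hence $x^t$ is a nonzero element of $\mathbb K$, and expressions such as $\lambda^3=x^t$ make sense with $\lambda\neq0$.

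Next I use the standard fact over a field: all $2\times2$ minors of a $2\times3$ matrix vanish iff its rank is at most $1$, i.e.\ iff the rows are linearly dependent. By hypothesis the first row $(a,b,c)$ is nonzero mod $p$. Therefore vanishing of the minors is equivalent to the existence of $\mu\in\mathbb K$ with $(x^tc,a,b)=\mu(a,b,c)$. Componentwise this reads
\[
x^tc=\mu a,\qquad a=\mu b,\qquad b=\mu c.
\]

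For the forward direction, chaining these gives $a=\mu^2c$ and then $x^tc=\mu^3c$. If $c=0$ then $b=\mu c=0$ and $a=\mu b=0$, contradicting the hypothesis, so $c\neq0$. Hence $\mu^3=x^t$, which forces $\mu\neq0$, and then $a=\mu^2c\neq0$. Setting $s=a$ and $\lambda=\mu$ yields $b=s\lambda^{-1}$ and $c=s\lambda^{-2}$, which is~\eqref{eq:cubic} with $s,\lambda$ nonzero.

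For the converse, given~\eqref{eq:cubic} I substitute directly into the minors:
\begin{align*}
b^2+ac&=s^2\lambda^{-2}+s^2\lambda^{-2}=0,\\
ab+x^tc^2&=s^2\lambda^{-1}+\lambda^3s^2\lambda^{-4}=0,\\
a^2+x^tbc&=s^2+\lambda^3s^2\lambda^{-3}=0,
\end{align*}
using characteristic $2$. Finally, for $t=0$ the condition $\lambda^3=1$ gives $\lambda^{-1}=\lambda^2$ and $\lambda^{-2}=\lambda$, which turns~\eqref{eq:cubic} into~\eqref{eq:unity}.

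The main point needing care is not the algebra but the bookkeeping: the minors in~\eqref{eq:HZshift} are those of $M$, not literally of the transposed entries in~\eqref{eq:HXshift}, and the minors must be reduced in $\mathbb K$ rather than in $R_N$. I would state this reading explicitly at the start of the proof.
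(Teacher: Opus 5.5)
Your proof is correct and follows the same route as the paper's. Over the field $\F[x]/(p)$, vanishing of the minors means the second row $(x^tc,a,b)$ is a scalar multiple $\lambda$ of the first, giving $x^tc=\lambda a$, $a=\lambda b$, $b=\lambda c$, and setting $s=a$ yields~\eqref{eq:cubic}. Your extra checks fill in nonvanishing claims the paper asserts without argument: that $x$ is a unit mod $p$, that $c\neq0$ forces $\lambda^3=x^t\neq0$, and the explicit substitution for the converse. Your reading of the minors as those in~\eqref{eq:HZshift} is the same one the paper's proof implicitly uses.
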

\begin{proof}
Both rows of~\eqref{eq:HXshift} are nonzero. Their minors vanish in the field $\F[x]/(p)$ precisely when the second row is $\lambda$ times the first. Thus
\begin{equation}
 x^tc=\lambda a,\qquad a=\lambda b,\qquad b=\lambda c.
\end{equation}
All three entries and $\lambda$ are nonzero in $\F[x]/(p)$. Setting $s=a$ yields~\eqref{eq:cubic}. Conversely,~\eqref{eq:cubic} makes the rows proportional.
\end{proof}
For $t=0$, the choice $\lambda=1$ yields~\eqref{eq:equal}. If $\deg p$ is even, there are two additional cube roots of unity, giving unequal residues. For example, with $p=1+x+x^2$, choosing $s=1$ and $\lambda=x$ yields
\[
a\equiv1,\qquad b\equiv x^2,\qquad c\equiv x\pmod p.
\]
If $\deg p$ is odd, $\lambda=1$ is the only solution.
For general $t$, the equation $\lambda^3=x^t$ has a unique solution when $\deg p$ is odd. When $\deg p=r$ is even, it has three solutions if
\[
(x^t)^{(2^r-1)/3}\equiv1\pmod p,
\]
and none otherwise. Thus the choice of $t$ determines which even-degree irreducible factors can contribute to $f$.
\subsection{Choosing the Polynomials}
For odd $N$, the dimension can be controlled by selecting
coprime factors $g,f$ of $x^N+1$. For each irreducible
factor $p$ of $g$, choose $a,b,c$ to be zero modulo $p$.
For each factor $p$ of $f$, choose nonzero residues
satisfying Proposition~\ref{prop:residues}. The reduced
matrix in~\eqref{eq:HXshift} then has rank zero modulo
the factors of $g$ and rank one modulo the factors of $f$.
To obtain $g_1=gf$ exactly, require rank two over
$\F[x]/(p)$ for every remaining irreducible factor $p$
of $x^N+1$. The Chinese remainder theorem combines
these local conditions into polynomials $a,b,c$.

The resulting polynomials need not have low weight.
To obtain sparse entries, search among codewords in
$\langle g\rangle$ whose residues satisfy the conditions
for the factors of $f$, without imposing rank two
at the remaining factors. Additional common factors
may then occur, so the gcds must be recomputed.
The dimension follows from~\eqref{eq:factorK};
the minimum distance is determined separately
from~\eqref{eq:distance}.

\subsection{Replacing the Generator Matrix}
Let $z=(\Delta_1,\Delta_2,\Delta_3)$. For $q\in R_N$, replacing $H_Z$ by the circulant expansion of $qz$ preserves orthogonality. The following condition characterizes when it also preserves $C_Z^\perp$.\pagebreak
\begin{proposition}\label{prop}
The row spaces of the circulant expansions of $z$ and $qz$ are equal if and only if
\begin{equation}
\gcd\left(q,\frac{x^N+1}{g_1}\right)=1.
\label{eq:q}
\end{equation}
\end{proposition}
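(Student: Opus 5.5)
The plan is to identify the binary row space of the circulant expansion of $z=(\Delta_1,\Delta_2,\Delta_3)$ with an $R_N$-submodule of $R_N^3$. Then we compare the submodules generated by $z$ and by $qz$.

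The row space of the expansion of a single polynomial row $(u_1,u_2,u_3)$ consists of all $R_N$-multiples $r\cdot(u_1,u_2,u_3)$, $r\in R_N$. This holds because the rows of the circulant expansion are the shifts $x^j(u_1,u_2,u_3)$, and their binary span is exactly the cyclic module $R_N u$. We must fix a transposition convention, but that is harmless: transposing all entries is a ring automorphism of $R_N$ and preserves the gcd with $x^N+1$ up to the reciprocal, so the statement is unchanged. Hence $\row(qz)=R_N qz\subseteq R_N z$, and equality holds iff the two modules have the same binary dimension, since one contains the other.

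Next, compute these dimensions with Theorem~\ref{rank}, or directly. The expansion of a single block row $u$ has rank $N-\deg\gcd(u_1,u_2,u_3,x^N+1)$. This is the case $J=1$ of Theorem~\ref{rank}, as already used in the proof of Theorem~\ref{minors-code}. For $z$ the rank is $N-\deg g_1$. For $qz$ the rank is $N-\deg\gcd(q\Delta_1,q\Delta_2,q\Delta_3,x^N+1)=N-\deg\gcd(qg_1',x^N+1)$, where $g_1'=\gcd(\Delta_1,\Delta_2,\Delta_3)$. We will use the identity
\[
\gcd(q\,\gcd(\Delta_i),x^N+1)=\gcd\bigl(q\,g_1,x^N+1\bigr).
\]
It follows because a gcd with $x^N+1$ depends only on the gcd of $\gcd(\Delta_i)$ with $x^N+1$, after multiplying through. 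Then $\gcd(qg_1,x^N+1)=g_1\gcd(q,(x^N+1)/g_1)$, using $g_1\mid x^N+1$. The ranks therefore agree iff $\deg\gcd(q,(x^N+1)/g_1)=0$, which is condition~\eqref{eq:q}.

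The main obstacle is the gcd identity, in particular the step $\gcd(q\,\gcd(\Delta_i),x^N+1)=\gcd(qg_1,x^N+1)$. This identity needs care because $x^N+1$ need not be square-free when $N$ is even. I would prove it prime by prime. Let $p^e\,\|\,x^N+1$, and write $v_p$ for the $p$-adic valuation. The left-hand exponent is $\min(e,v_p(q)+\min_i v_p(\Delta_i))$. The right-hand exponent is $\min(e,v_p(q)+\min(e,\min_i v_p(\Delta_i)))$. These agree by checking the cases $\min_i v_p(\Delta_i)\ge e$ and $\min_i v_p(\Delta_i)<e$. The same valuation bookkeeping gives $\gcd(qg_1,x^N+1)=g_1\gcd(q,(x^N+1)/g_1)$. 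Finally, if $q$ is a zero divisor or zero in $R_N$, the argument still applies, because only the gcd with $x^N+1$ of a representative of $q$ enters.
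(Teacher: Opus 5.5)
Your proof is correct, but it reaches the conclusion by a different route from the paper. Both arguments start from $R_N qz\subseteq R_N z$.

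The paper then computes the annihilator of $z$. The kernel of $r\mapsto rz$ is generated by $\operatorname{lcm}_i (x^N+1)/\gcd(\Delta_i,x^N+1)=(x^N+1)/g_1$. This gives $R_N z\cong \F[x]/\bigl((x^N+1)/g_1\bigr)$ with $qz\mapsto q$, so the two modules coincide exactly when $q$ is a unit in that quotient.

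You instead compare binary dimensions. You use the $J=1$ case of Theorem~\ref{rank} together with the identity $\gcd(q\Delta_1,q\Delta_2,q\Delta_3,x^N+1)=g_1\gcd\bigl(q,(x^N+1)/g_1\bigr)$, which you verify prime by prime. That valuation check is correct, including the non-square-free case and the degenerate cases $q=0$ or all $\Delta_i=0$.

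The paper's route is more structural and needs no rank theorem. It characterizes the admissible multipliers as exactly those $q$ whose images are units of $\F[x]/\bigl((x^N+1)/g_1\bigr)$. Your route leans on the already established rank formula. In exchange it gives a quantitative by-product, $\dim R_N qz=N-\deg g_1-\deg\gcd\bigl(q,(x^N+1)/g_1\bigr)$, which shows how much the row space shrinks when the condition fails. Your prime-by-prime bookkeeping plays the same role as the identity $\operatorname{lcm}_i\bigl((x^N+1)/d_i\bigr)=(x^N+1)/\gcd_i d_i$ that the paper's proof uses implicitly.

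Your remark about the transposition convention is unnecessary, since $z$ is defined from the untransposed minors, but it does no harm.
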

\begin{proof}
The row space of the circulant expansion of $z$ is $R_N z = \{rz \mid r \in R_N\}$, and $R_N qz \subseteq R_N z$. The map $r \mapsto rz$ from $R_N$ onto $R_N z$ has kernel $\{r \mid r\Delta_i \equiv 0 \pmod{x^N+1},\ i = 1,2,3\}$. For each $i$, $r\Delta_i \equiv 0$ if and only if $(x^N+1)/\gcd(\Delta_i, x^N+1)$ divides $r$, so the kernel is generated by
\[
\operatorname{lcm}_i \frac{x^N+1}{\gcd(\Delta_i, x^N+1)} = \frac{x^N+1}{g_1}.
\]
Hence $R_N z \cong \mathbb{F}_2[x]/\left((x^N+1)/g_1\right)$, with $z$ corresponding to $1$ and $qz$ to $q$. Thus $R_N qz = R_N z$ if and only if $q$ is a unit in this quotient, which is equivalent to \eqref{eq:q}.
\end{proof}

An admissible $q$ preserves the two classical codes and hence $K,d_X,d_Z$, but may change the generator weights. All multipliers used in the examples satisfy~\eqref{eq:q}.

\section{Constructed Codes}The following constructions illustrate how common factors of the minors can be used to control the dimension while the polynomial entries are chosen with low weight. Tables~\ref{tab:parameters} and~\ref{tab:polynomials} give the parameters and polynomials of the constructed codes. Every row specifies the construction in~\eqref{eq:HXshift} and~\eqref{eq:HZshift}, with an optional multiplier $q$ as in Proposition~\ref{prop}. For an exponent set $S$, the corresponding polynomial is $\sum_{i\in S}x^i$. All exponents start at zero and are reduced modulo $N$. The rows of $H_X$ have low weight, while the rows of $H_Z$ are products of entries and are heavier. Table~\ref{tab:parameters} lists both.

\subsection{A Code with Parameters \texorpdfstring{$[[255,80,12]]$}{[[255,80,12]]}}
Let $N=85$ and take
\begin{align}
 a(x)&=1+x^{17}+x^{24}+x^{34}+x^{40}+x^{44}+x^{45},\nonumber\\
 b(x)&=x^5+x^{22}+x^{29}+x^{39}+x^{45}+x^{49}+x^{50},\nonumber\\
 c(x)&=x+x^{38}+x^{41}+x^{46}+x^{55}+x^{65}+x^{74}+x^{84}.
 \label{eq:255polynomials}
\end{align}
Use these polynomials in~\eqref{eq:HXshift} and~\eqref{eq:HZshift} with $t=0$ and $q=1$. Their gcd is
\begin{align}
 g(x)={}&1+x^3+x^4+x^5+x^6+x^{10}+x^{14}+x^{15}\nonumber\\
        &+x^{16}+x^{17}+x^{21}+x^{22}+x^{24}.
\end{align}
The common gcd of the minors with $x^{85}+1$ is $g_1=gf$, where
$$f(x)=1+x+x^2+x^3+x^4.$$
Moreover,
\begin{equation}
b=x^5a,\qquad a\equiv b\equiv c\equiv x+x^3\pmod f.
\label{eq:255residues}
\end{equation}
Since $f\mid x^5+1$, these residues satisfy the rank-one condition.

It follows that $\deg g=24,\deg f=4$, and
\begin{equation}
K=3(24)+2(4)=80, \quad \frac{K}{3N}=\frac{80}{255}\simeq0.3137.\end{equation}
The binary ranks are $\rk H_X=118$ and $\rk H_Z=57$. 
Every row of $H_X$ has weight $22$, and every row of $H_Z$ has weight $61$.

An explicit nontrivial logical operator is represented by
\begin{equation}
 \ell_X=(\ell_1,0,\ell_3),
\end{equation}
where
\begin{align}
 \ell_1&=1+x^{14}+x^{39}+x^{40}+x^{45}+x^{51}+x^{54}+x^{62},\nonumber\\
 \ell_3&=x^{16}+x^{25}+x^{29}+x^{60}.
\end{align}
This vector has weight $12$ and lies in $\ker H_Z$. Appending it to $H_X$ increases the rank from $118$ to $119$, so it is not in $\operatorname{Im} H_X$.

The all-ones vector lies in $\operatorname{Im}H_Z$, so every word in $\ker H_Z$ has even weight. Exhaustive column-sum enumeration excludes weights two through eight. Any weight ten word has at least four positions in one block, including a pair with cyclic gap at most $21$. Up to cyclic shift, there are $63$ such pairs. Matching the remaining supports of size four excludes weight ten.

\begin{table}[t]
\caption{Quantum codes obtained from the construction. Parameters $w_X$ and $w_Z$ are the row weights of $H_X$ and $H_Z$. All listed minimum distances are exact.}
\label{tab:parameters}
\centering\small\renewcommand{\arraystretch}{1.06}\setlength{\tabcolsep}{4.5pt}
\begin{tabular}{c c c c c c c}
\toprule
Example & $[[n,K,d]]$ & $K/n$ & $\deg g$ & $\deg g_1$ & $w_X$ & $w_Z$\\
\midrule
1 & $[[27,4,4]]$ & 0.1481 & 0 & 2 & 8 & 11\\
2 & $[[45,16,4]]$ & 0.3556 & 4 & 6 & 10 & 19\\
3 & $[[63,15,6]]$ & 0.2381 & 5 & 5 & 10 & 21\\
4 & $[[93,15,8]]$ & 0.1613 & 5 & 5 & 10 & 25\\
5 & $[[93,17,7]]$ & 0.1828 & 5 & 6 & 9 & 22\\
6 & $[[135,14,9]]$ & 0.1037 & 4 & 5 & 9 & 24\\
7 & $[[135,22,8]]$ & 0.1630 & 4 & 9 & 9 & 30\\
8 & $[[135,14,9]]$ & 0.1037 & 4 & 5 & 9 & 22\\
9 & $[[135,14,10]]$ & 0.1037 & 4 & 5 & 9 & 32\\
10 & $[[135,32,9]]$ & 0.2370 & 10 & 11 & 35 & 38\\
11 & $[[153,51,8]]$ & 0.3333 & 17 & 17 & 18 & 38\\
12 & $[[189,58,10]]$ & 0.3069 & 18 & 20 & 16 & 63\\
13 & $[[255,80,12]]$ & 0.3137 & 24 & 28 & 22 & 61\\

\bottomrule
\end{tabular}

\end{table}

\begin{table*}[t]
\caption{Polynomial entries. The entries in each set are the exponents with nonzero coefficients.}
\label{tab:polynomials}
\centering\small\renewcommand{\arraystretch}{1.04}\setlength{\tabcolsep}{3pt}
\begin{tabular}{c r r >{\raggedright\arraybackslash}p{1.70in} >{\raggedright\arraybackslash}p{1.70in} >{\raggedright\arraybackslash}p{1.70in} c}
\toprule
Example & $N$ & $t$ & $\supp(a)$ & $\supp(b)$ & $\supp(c)$ & $q$\\
\midrule
1 & 9 & 0 & $\{0,\allowbreak 1,\allowbreak 3\}$ & $\{0,\allowbreak 1\}$ & $\{2,\allowbreak 5,\allowbreak 6\}$ & $1$ \\
2 & 15 & 0 & $\{0,\allowbreak 3,\allowbreak 14\}$ & $\{6,\allowbreak 7,\allowbreak 11,\allowbreak 14\}$ & $\{4,\allowbreak 11,\allowbreak 13\}$ & $1$ \\
3 & 21 & 0 & $\{2,\allowbreak 4,\allowbreak 11,\allowbreak 16\}$ & $\{13,\allowbreak 17,\allowbreak 18\}$ & $\{5,\allowbreak 9,\allowbreak 10\}$ & $q_{3}$ \\
4 & 31 & 0 & $\{6,\allowbreak 15,\allowbreak 24,\allowbreak 27\}$ & $\{12,\allowbreak 21,\allowbreak 28\}$ & $\{4,\allowbreak 17,\allowbreak 18\}$ & $q_{4}$ \\
5 & 31 & 0 & $\{8,\allowbreak 15,\allowbreak 30\}$ & $\{19,\allowbreak 21,\allowbreak 24\}$ & $\{25,\allowbreak 27,\allowbreak 30\}$ & $1$ \\
6 & 45 & 0 & $\{3,\allowbreak 5,\allowbreak 26\}$ & $\{0,\allowbreak 7,\allowbreak 9\}$ & $\{17,\allowbreak 39,\allowbreak 41\}$ & $1$ \\
7 & 45 & 0 & $\{27,\allowbreak 28,\allowbreak 31\}$ & $\{22,\allowbreak 30,\allowbreak 39\}$ & $\{12,\allowbreak 22,\allowbreak 32\}$ & $1$ \\
8 & 45 & 42 & $\{14,\allowbreak 15,\allowbreak 33\}$ & $\{0,\allowbreak 2,\allowbreak 23\}$ & $\{7,\allowbreak 14,\allowbreak 31\}$ & $1$ \\
9 & 45 & 1 & $\{20,\allowbreak 29,\allowbreak 42\}$ & $\{0,\allowbreak 33,\allowbreak 44\}$ & $\{0,\allowbreak 6,\allowbreak 43\}$ & $1$ \\
10 & 45 & 0 & $\{3,\allowbreak 4,\allowbreak 6,\allowbreak 10,\allowbreak 12,\allowbreak 13,\allowbreak 15,\allowbreak 17,\allowbreak 18,\allowbreak 19,\allowbreak 20,\allowbreak 23,\allowbreak 27\}$ & $\{4,\allowbreak 5,\allowbreak 8,\allowbreak 13,\allowbreak 14,\allowbreak 17,\allowbreak 29,\allowbreak 30,\allowbreak 32,\allowbreak 35,\allowbreak 39\}$ & $\{9,\allowbreak 10,\allowbreak 12,\allowbreak 13,\allowbreak 14,\allowbreak 15,\allowbreak 16,\allowbreak 20,\allowbreak 21,\allowbreak 26,\allowbreak 30\}$ & $q_{10}$ \\
11 & 51 & 0 & $\{3,\allowbreak 7,\allowbreak 20,\allowbreak 24,\allowbreak 37,\allowbreak 41\}$ & $\{10,\allowbreak 17,\allowbreak 20,\allowbreak 27,\allowbreak 38,\allowbreak 50\}$ & $\{16,\allowbreak 18,\allowbreak 27,\allowbreak 30,\allowbreak 42,\allowbreak 50\}$ & $q_{11}$ \\
12 & 63 & 0 & $\{5,\allowbreak 17,\allowbreak 30,\allowbreak 34,\allowbreak 35,\allowbreak 42\}$ & $\{15,\allowbreak 17,\allowbreak 23,\allowbreak 47,\allowbreak 57\}$ & $\{4,\allowbreak 9,\allowbreak 51,\allowbreak 52,\allowbreak 55\}$ & $1$ \\
13 & 85 & 0 & $\{0,\allowbreak 17,\allowbreak 24,\allowbreak 34,\allowbreak 40,\allowbreak 44,\allowbreak 45\}$ & $\{5,\allowbreak 22,\allowbreak 29,\allowbreak 39,\allowbreak 45,\allowbreak 49,\allowbreak 50\}$ & $\{1,\allowbreak 38,\allowbreak 41,\allowbreak 46,\allowbreak 55,\allowbreak 65,\allowbreak 74,\allowbreak 84\}$ & $1$ \\

\bottomrule 
\end{tabular}
\par\vspace{5pt}
\noindent\begin{minipage}{0.97\textwidth}\footnotesize
The support of each nonconstant multiplier is given below. Each satisfies~\eqref{eq:q}.
\begin{align*}
\supp(q_{3})&=\{0,7,8\},\\
\supp(q_{4})&=\{0,2,4,5,8,12,13,19,20\},\\
\supp(q_{10})&=\{4,5,6,10,11,12,16,18,19,20,21,25,26,27,29,32\},\\
\supp(q_{11})&=\{3,7,8,11,14,16,18,20,21,22,27,29\}.
\end{align*}
\end{minipage}\vspace{-5mm}
\end{table*}

For $d_Z$, let $D = \langle (x^{85}+1)/g \rangle$, so that $d(D) = 24$. The words of $\ker H_X$ supported on a single block form $D$ for the first and third blocks, and a code of dimension $32$ and minimum distance $16$ for the second block, since $b = x^5 a$. A nonzero word of weight at most $11$ has a block of weight at most three. Fixing this block up to cyclic shift, the other two blocks lie in one of $16$ cosets of the product of their single block codes, and the minimum weight of each coset splits over the two blocks. Checking all such cosets, together with the words that have a zero block, shows that $\ker H_X$ has no nonzero word of weight at most $11$. Hence $d_Z \geq 12$. 
A $Z$ logical operator of weight $20$ gives $d_Z\le20$. Hence
$$d_X=12,\qquad 12\le d_Z\le20,$$
and the code has parameters $[[255,80,12]]$.

\subsection{Further Choices of the Common Factors} For Example 12, let $N=63$ and take
\begin{align}
 a(x)&=x^5+x^{17}+x^{30}+x^{34}+x^{35}+x^{42},\nonumber\\
 b(x)&=x^{15}+x^{17}+x^{23}+x^{47}+x^{57},\nonumber\\
 c(x)&=x^4+x^9+x^{51}+x^{52}+x^{55}.
 \label{eq:189polynomials}
\end{align}
The gcd of the minors with $x^{63}+1$ is $g_1=gf$, where
\begin{align}
 g(x)={}&1+x^3+x^6+x^7+x^8+x^{10}\nonumber\\
     &+x^{13}+x^{14}+x^{15}+x^{16}+x^{18},\nonumber\\
 f(x)={}&1+x+x^2.
\end{align}
Unlike Example 13, Example 12 uses distinct nonzero residues modulo $f$:
\begin{equation}
 a\equiv1,\qquad b\equiv x^2,\qquad c\equiv x\pmod f.
\end{equation}
Since $x^3=1\pmod f$, the residues satisfy~\eqref{eq:unity} with $\lambda=x$. Hence
\begin{align}
 &K=3(18)+2(2)=58,\nonumber\\ &K/(3N)=58/189\simeq0.3069.
\end{align}
The binary ranks are $88$ and $43$. The row weights of $H_X$ and $H_Z$ are 16 and 63, respectively, with \(q=1\).

The support
\begin{equation}
 \{16,78,93,97,110,115,131,136,166,185\}
\end{equation}
defines a weight $10$ vector in $\ker H_Z\setminus\operatorname{Im} H_X$ (coordinates start at zero). Exhaustive support enumeration excludes nontrivial $X$ operators of weight at most eight. The all-ones vector lies in the row space of $H_Z$, excluding odd weights. For $d_Z$, the words of $\ker H_X$ supported on a single block form $\langle (x^{63}+1)/g \rangle$, of minimum distance $16$, for the first and second blocks, and a code of dimension $21$ and minimum distance $12$ for the third block. Fixing a block of weight at most three, as in Example~13, shows that $\ker H_X$ has no nonzero word of weight at most $11$. A logical operator of weight $12$ supported on the third block, for example the support $\{126, 132, 135, 138, 147, 153, 156, 159, 168, 174, 177, 180\}$, attains the bound. Thus $d_X=10$, $d_Z=12$, giving $[[189,58,10]]$.

Example 11 has $N=51$ and $g=g_1$ of degree $17$. Thus $K=3(17)=51$ and the rate is $1/3$. The three polynomials in Table~\ref{tab:polynomials} each have weight six. The multiplier $q_{11}$ gives a generator of the same $C_Z^\perp$ with row weight $38$. The exact $X$ and $Z$ distances are $d_X=8$ and $d_Z=12$, giving $[[153,51,8]]$.

Finally, for Example 10, $N=45$, $\deg g=10$, and $\deg g_1=11$, so $K=32$. The multiplier $q_{10}$ preserves $C_Z^\perp$ and gives row weight $38$. The exact distances $d_X=9$ and $d_Z=10$ give $[[135,32,9]]$. For both examples, exhaustive support enumeration and affine-class calculations establish the $X$ and $Z$ lower bounds, respectively. Explicit operators attain them.

\subsection{Choosing Codewords of Small Weight}

To find low-weight multiples of $g$, group small sets of exponents by their remainders modulo $g$. For an exponent set $S$, let
$$r(S)=\sum_{i\in S}x^i\pmod g.$$
If $r(S)=r(T)$, then
\begin{equation}
g\mid\sum_{i\in S}x^i+\sum_{j\in T}x^j.
\label{eq:smallcodewords}
\end{equation}
Repeated exponents cancel. Sets of sizes three and three yield a codeword of weight at most six, while sizes two and three yield weight at most five. This method produces the entries in Examples 11 and 12. For Example 12, residues $(1,x^2,x)$ are selected modulo $1+x+x^2$, followed by verification of the gcds and distances.

\section{Conclusion}
Polynomial minors of a $2 \times 3$ matrix give an explicit inclusion $C_Z^{\perp} \subseteq C_X$ and hence a CSS quantum code of length $3N$. For odd $N$, the common factors of the entries and of the minors determine the dimension, and residue conditions on the polynomials can be used to reach a target dimension. The resulting codes include $[[255, 80, 12]]$ and $[[189, 58, 10]]$, with verified exact minimum distances. The rows of $H_Z$ are heavier than those of $H_X$. Multipliers that preserve the code can reduce this weight, and constructions with low weight in both matrices, and their decoding performance, are left for future work.
\balance
\section*{Acknowledgement}
This material is based upon work supported by the National Science Foundation under Grant No. CCF-2145917. 



\begin{thebibliography}{10}
\providecommand{\url}[1]{#1}
\csname url@samestyle\endcsname
\providecommand{\newblock}{\relax}
\providecommand{\bibinfo}[2]{#2}
\providecommand{\BIBentrySTDinterwordspacing}{\spaceskip=0pt\relax}
\providecommand{\BIBentryALTinterwordstretchfactor}{4}
\providecommand{\BIBentryALTinterwordspacing}{\spaceskip=\fontdimen2\font plus
\BIBentryALTinterwordstretchfactor\fontdimen3\font minus \fontdimen4\font\relax}
\providecommand{\BIBforeignlanguage}[2]{{%
\expandafter\ifx\csname l@#1\endcsname\relax
\typeout{** WARNING: IEEEtran.bst: No hyphenation pattern has been}%
\typeout{** loaded for the language `#1'. Using the pattern for}%
\typeout{** the default language instead.}%
\else
\language=\csname l@#1\endcsname
\fi
#2}}
\providecommand{\BIBdecl}{\relax}
\BIBdecl

\bibitem{calderbank}
A.~R. Calderbank and P.~W. Shor, ``Good quantum error-correcting codes exist,'' \emph{Physical Review A}, vol.~54, no.~2, p. 1098, 1996.

\bibitem{steane}
A.~Steane, ``Multiple-particle interference and quantum error correction,'' \emph{Proceedings: Mathematical, Physical and Engineering Sciences}, pp. 2551--2577, 1996.

\bibitem{minors}
R.~Smarandache, A.~G{\'o}mez-Fonseca, and D.~G. Mitchell, ``Using minors to construct generator matrices for quasi-cyclic {LDPC} codes,'' \emph{Proc. IEEE International Symposium on Information Theory (ISIT)}, pp. 548--553, 2022.

\bibitem{smarandache2024generalized}
R.~Smarandache, A.~G{\'o}mez-Fonseca, and D.~G. Mitchell, ``Generalized quasi-cyclic {LDPC} codes: Design and efficient encoding,'' \emph{Proc. IEEE International Symposium on Information Theory (ISIT)}, pp. 434--439, 2024.

\bibitem{mackay2004sparse}
D.~J. MacKay, G.~Mitchison, and P.~L. McFadden, ``Sparse-graph codes for quantum error correction,'' \emph{IEEE Transactions on Information Theory}, vol.~50, no.~10, pp. 2315--2330, 2004.

\bibitem{kovalev2013quantum}
A.~A. Kovalev and L.~P. Pryadko, ``Quantum {Kronecker} sum-product low-density parity-check codes with finite rate,'' \emph{Physical Review A—Atomic, Molecular, and Optical Physics}, vol.~88, no.~1, p. 012311, 2013.

\bibitem{kasai2011quantum}
K.~Kasai, M.~Hagiwara, H.~Imai, and K.~Sakaniwa, ``Quantum error correction beyond the bounded distance decoding limit,'' \emph{IEEE Transactions on Information Theory}, vol.~58, no.~2, pp. 1223--1230, 2012.

\bibitem{panteleev2021quantum}
P.~Panteleev and G.~Kalachev, ``Quantum {LDPC} codes with almost linear minimum distance,'' \emph{IEEE Transactions on Information Theory}, vol.~68, no.~1, pp. 213--229, 2022.

\bibitem{lin2024quantum}
H.-K. Lin and L.~P. Pryadko, ``Quantum two-block group algebra codes,'' \emph{Physical Review A}, vol. 109, no.~2, p. 022407, 2024.

\bibitem{bravyi2024high}
S.~Bravyi, A.~W. Cross, J.~M. Gambetta, D.~Maslov, P.~Rall, and T.~J. Yoder, ``High-threshold and low-overhead fault-tolerant quantum memory,'' \emph{Nature}, vol. 627, no. 8005, pp. 778--782, 2024.

\bibitem{nouroziapm}
{V. Nourozi, D. G. M. Mitchell, and T. Koike-Akino}, ``{Base-Preserving APM/Voltage Lifts of Bivariate Bicycle Quantum {LDPC} Codes},'' in \emph{IEEE International Conference on Quantum Computing and Engineering (QCE)}, Sep. 2026.

\bibitem{nouroziq}
{V. Nourozi, D. G. M. Mitchell, and T. Koike-Akino}, ``{Q-Learning Base Search Voltage-Labeled Covers for Weight-Six Bivariate-Bicycle Quantum {LDPC} Codes},'' in \emph{Proc. IEEE International Conference on Quantum Computing and Engineering (QCE)}, Sep. 2026.

\bibitem{nourozilp}
V.~Nourozi, M.~Moradi, R.~Smarandache, and D.~G.~M. Mitchell, ``Lifted-product {QLDPC} codes in the polynomial domain,'' \emph{arXiv preprint arXiv:2609.01305}, 2026.

\bibitem{Nourozi2026sep4}
{V. Nourozi, D. G. M. Mitchell, and T. Koike-Akino}, ``{Net-Voltage Screening of Bivariate-Bicycle {T}anner-Graph Covers},'' in \emph{Proc. IEEE International Conference on Quantum Computing and Engineering (QCE)}, Sep. 2026.

\bibitem{smarandache_rank}
R.~Smarandache, D.~G. Mitchell, and A.~G{\'o}mez-Fonseca, ``Structural analysis of generalized quasi-cyclic {LDPC} codes: Rank, design, and generator matrices,'' \emph{IEEE Transactions on Information Theory}, vol.~72, no.~1, pp. 253–267, 2026.

\bibitem{sv12}
R.~Smarandache and P.~O. Vontobel, ``Quasi-cyclic {LDPC} codes: Influence of proto-and {T}anner-graph structure on minimum hamming distance upper bounds,'' \emph{IEEE Transactions on Information Theory}, vol.~58, no.~2, pp. 585--607, 2012.

\end{thebibliography}
\end{document}